\newcommand{\comment}[1]{}
\newtheorem{theorem}{Theorem}[section]
     \newtheorem{lemma}[theorem]{Lemma}
     \newenvironment{definition}[1][Definition:]{\begin{trivlist}
     \item[\hskip \labelsep {\bfseries #1}]}{\end{trivlist}}
     \newcommand{\qed}{\nobreak \ifvmode \relax \else
           \ifdim\lastskip<1.5em \hskip-\lastskip
           \hskip1.5em plus0em minus0.5em \fi \nobreak
           \vrule height0.75em width0.5em depth0.25em\fi}
\begin{document}


\title{Joint Write-Once-Memory and Error-Control Codes}

\author{\authorblockN{Xudong Ma \\}
\authorblockA{Pattern Technology Lab LLC, U.S.A.\\
Email: xma@ieee.org} }

\maketitle

\begin{abstract}

Write-Once-Memory (WOM) is a model for many modern non-volatile memories, such as flash memories. Recently, several capacity-achieving WOM coding schemes have been proposed based on polar coding. Due to the fact that practical computer memory systems always contain noises, a nature question to ask next is how may we generalize these coding schemes, such that they may also have the error-control capabilities. In this paper, we discuss a joint WOM and error-control coding scheme, which is a generalization of the capacity-achieving WOM codes based on source polarization in \cite{ma15}. In this paper, we prove a sufficient and necessary condition for the noisy reading channel being less noisy than the test channel in data encoding in the polar WOM coding. Such a sufficient and necessary condition is usually satisfied in reality. As a consequence of the sufficient and necessary condition, the high entropy set related to the noisy channel is usually strictly contained in the high entropy set related to the test channel in data encoding. Therefore the low-complexity polar joint WOM and error-control codes are sufficient for most practical coding scenarios.

 \begin{keywords}
 Polar Code, Write-Once-Memory, Flash Memory, Block Markov Coding, Information Theory
 \end{keywords}
 

\end{abstract}

\section{Introduction}

\label{section_introduction}

Write-Once-Memories (WOM) are a frequently used model for modern computer memories, such as flash memories. For example, for Single-Level-Cell (SLC) type flash memories, each memory cell may be turned from bit 1 to bit 0 by applying programming pulses. However, the memory cell can not be turned from bit 0 to bit 1 without using a block erasure operation. If no erasure operation is used, then the SLC type flash memory is a perfect example of WOM.

Modern NAND-type flash memories usually can endure very limited numbers of erasure operations. The block erasure operations are also expensive in terms of time and power consumption  and thus should be avoided as much as possible. Rewriting coding schemes for WOM are a type of coding schemes, where data can still be written into WOM, when some memory cells have already been written. Rewriting codes for flash memories are thus preferred, due to the fact that data can be written into each memory block multiple times without using any erasure operation.

The concept of rewriting codes for WOM was coined by Rivest and Shamir \cite{rivest82}. However, constructing capacity-achieving WOM codes had been an open problem for almost three decades. Recently, several capacity-achieving WOM codes were proposed. One algebraic construction based capacity-achieving WOM coding scheme is proposed by Shipilka \cite{shipilka12}. The coding scheme achieves zero encoding error probability, but have polynomial computational complexities. Another polar coding based capacity-achieving WOM coding scheme is proposed by Burshtein and Strugatski~\cite{burshtein12}. This coding scheme has asymptotic vanishing encoding error probability and linear coding complexities. However, the coding scheme needs a random dither signal shared by the encoder and decoder. For most data storage applications, the random dither signal is difficult to implement. A third capacity-achieving WOM code construction was proposed in~\cite{ma15}, where the coding scheme is based on source polarization~\cite{arikan10}. The WOM coding scheme in~\cite{ma15} has vanishing encoding error probabilities and linear coding complexities. One major advantage of this coding scheme is that no shared random dither signal is needed.

Due to the fact that practical computer memory systems always contain noises, a nature question to ask next is how may we generalize the above coding schemes, such that they may also have the error-control capabilities. Some joint WOM and channel coding schemes have already been proposed in~\cite{gad14}. The coding schemes in \cite{gad14} may be considered as generalizations of the WOM coding schemes in \cite{burshtein12}.

Recently, some new joint WOM and channel coding schemes were also proposed in \cite{gad14b}. In \cite{gad14b}, two coding schemes were proposed including one basic coding scheme, and one chaining based coding scheme. The basic coding scheme may be considered as a certain generalization of the WOM coding scheme in \cite{ma15} from the noiseless WOM case to the noisy WOM case. However, the paper \cite{gad14b} only showed that the basic coding scheme can be used for a specific channel model, where any memory cells with bit 0 can be read with zero error probability. Such a channel model is certainly not practical in reality. Therefore, a more sophisticated coding scheme was proposed in \cite{gad14b} for more general cases. The sophisticated coding scheme is built upon the basic coding scheme and chaining, a coding strategy previously used mainly for broadcast channels.

In this paper, we discuss a joint WOM and channel coding scheme similar to the basic coding scheme in \cite{gad14b}. As one main contribution of this paper, we prove a sufficient and necessary condition for the noisy reading channel being less noisy than the test channel in data encoding. Such a sufficient and necessary condition is usually satisfied in practical scenarios, as long as the noisy reading process of the WOM is not extremely noisy. We prove that if the sufficient and necessary condition is satisfied, then in the polarization process, the high entropy set due to the noisy channel is always contained in the high entropy set due to the test channel for data encoding. Thus, the very simple coding scheme is sufficient for most practical WOM coding scenarios.

The rest of the paper is organized as follows. In Section \ref{section_basic_scheme}, we discuss the considered joint WOM and error-control coding scheme.  In Section \ref{section_less_noisy}, we present our proof of the sufficient and necessary condition for less noisy. In the section, we show that the high entropy set due to the noisy channel is always contained in the high entropy set due to the test channel for data encoding, if the sufficient and necessary condition is satisfied. Finally, some concluding remarks are presented in Section \ref{sec_conclusion}.

We use the following notation throughout this paper. We  use $X_{n}^{N}$ to denote a sequence of symbols $X_n, X_{n+1}, \ldots, X_{N-1}, X_{N}$. With some abuse of notation, we may also use $X_{n}^{N}$ to denote a row vector $[X_n,X_{n+1},\ldots,X_N]$. We use upper-case letters to denote random variables and lower-case letters to denote the corresponding realizations. For example, $X_i$ is one random variable, and $x_i$ is one realization of the random variable $X_i$. We use $H(X)$ to denote the entropy of the random variable $X$, $I(X;Y)$ to denote the mutual information between the random variables $X$, $Y$, and many other standard information-theoretic notation used previously (for example as in \cite{cover06}). If $\mathcal{G}$ is a set, then we use $\mathcal{G}^c$ to denote the complement of $\mathcal{G}$.

\section{The joint WOM and error-control coding scheme}
\label{section_basic_scheme}

In this paper, our discussions are limited to the binary WOM case. However, it should be clear that the discussions in this paper may be generalized easily to non-binary cases using non-binary polar coding, for example the coding schemes in \cite{mori10} and \cite{park13}. The corresponding considered flash memory is thus of SLC type. We consider a SLC type flash memory block with $N$ memory cells. We use the random variables $S_{1}^{N}$ to denote the memory cell states before the considered round of data encoding. We use the random variable $X_{1}^{N}$ to denote the codeword during the current round of data encoding. We use $Y_{1}^{N}$ to denote the noisy sensing results of the codeword $X_{1}^{N}$. We assume that each $Y_i$ is a discrete random variable with a finite size alphabet. The assumption is usually true for the case of flash memories, because there are always finitely many sensing threshold voltage levels. However, it should be noted that the results in this paper can be generalized to other WOM coding cases, where each $Y_i$ is a continuous random variable. The generalization can be achieved by considering fine quantization of the continuous random variable $Y_i$.

As a quick example, if we assume that the memory block contains 4 memory cells, then $N=4$. The memory cell state $S_{1}^{N} = [1,1,0,0]$ implies that the first memory cell is at bit 1, the second memory cell is at bit 1, the third memory cell is at bit 0, and the fourth memory cell is at bit 0. The codeword $X_{1}^{N}$ may be $[0,1,0,0]$. In this case, the first memory cell is turned from bit 1 to bit 0, in order to record a certain message.

We define the polar transformation as $U_{1}^{N}=X_{1}^{N}G_k$. The matrix $G_k$ is defined as
\begin{align}
G_k = \left[\begin{array}{cc}
1 & 0 \\
1 & 1
\end{array}\right]^{\otimes k}
\end{align}
where, $\otimes k$ denotes the k-th Kronecker power, $k=\log_{2} N$.
Note the in the traditional definitions,  in the channel polarization process $U_{1}^{N}=X_{1}^{N}{G_k}^{-1}$, and in the source polarization process $U_{1}^{N}=X_{1}^{N}{G_k}$. There are also some bit-reversal permutations in the transforms. However, for the binary case ${G_k}^{-1} = G_k$. Therefore, the definition in the current paper is roughly equal to the traditional definitions and is frequently used in the recent research papers. Note that generally speaking  ${G_k}^{-1} = G_k$ may not hold for non-binary polar coding cases. Thus, extra cares need to be taken, if necessary.

In this section, we  present the  encoding and decoding algorithms of the discussed joint WOM and error-control codes. We assume that the states of memory cells $S_i$ are independent and identically distributed  (iid) with the following probability distribution $P(\cdot)$,
\begin{align}
\label{model_p_2}
{P}(S_i) = \left\{
\begin{array}{ll} \beta, & \mbox{if } S_i = 0 \\
1-\beta, & \mbox{if }S_i = 1
\end{array}
\right.
\end{align}
where, $\beta$ is a real number, $0<\beta<1$.
Let each $X_i$ be conditionally independent of the other $X_j$ given $S_i$.
\begin{align}
\label{model_p_3}
{P}(X_i|S_i) = \left\{
\begin{array}{ll}
1,  & \mbox{for } X_i=0, S_i=0 \\
0,  & \mbox{for } X_i=1, S_i=0 \\
\gamma,  & \mbox{for } X_i=0, S_i=1 \\
1-\gamma,  & \mbox{for } X_i=1, S_i=1
\end{array}
\right.
\end{align}
where, $\gamma$ is a real number, $0<\gamma<1$.

From the source polarization theory established in \cite{arikan10}, we have that there exists a set ${\mathcal F}_N$, such that for each $i\in {\mathcal F}_N$, $H(U_i|S_{1}^{N}, U_{1}^{i-1})$ is close to $1$, and for each $i\notin {\mathcal F}_N$, $H(U_i|S_{1}^{N}, U_{1}^{i-1})$ is close to $0$. We call the set ${\mathcal F}_N$ as the high-entropy set corresponding to the test channel of data encoding. Similarly, we have that there exists a set ${\mathcal G}_N$, such that for each $i\in {\mathcal G}_N$, $H(U_i|Y_{1}^{N}, U_{1}^{i-1})$ is close to $1$, and for each $i\notin {\mathcal G}_N$, $H(U_i|Y_{1}^{N}, U_{1}^{i-1})$ is close to $0$. We call the set ${\mathcal G}_N$ as the high-entropy set corresponding to the noisy channel $P(Y_i|X_i)$.

In Theorem \ref{entropy_set_theorem}, we will prove a sufficient and necessary condition for having ${\mathcal G}_N \subset {\mathcal F}_N$. Such a condition is usually satisfied for most practical scenarios. Therefore, a very simple joint WOM and error-control coding scheme as shown in Algorithms \ref{polar_encoding_algorithm} and \ref{polar_decoding_algorithm} can be usually applied.

\begin{algorithm}
\caption{Joint WOM  and error-control encoding}\label{polar_encoding_algorithm}
\begin{algorithmic}[1]
\State The algorithm takes inputs
\begin{itemize}
\item the current memory states of the memory cells $s_1^N$
\item the high-entropy set ${\mathcal F}_N$
\item the high-entropy set ${\mathcal G}_N$
\item the to-be-recorded message $v_1^M$
\item freeze bits $f_{1}^{L}$
\end{itemize}
\State{$n\gets 1, m\gets 1, l\gets 1$}
\Repeat
  \If{ $n\in F_N$}

    \If{ $n\in G_N$}

      \State $u_n \gets f_l$
      \State $n\gets n+1$
      \State $l\gets l+1$

  \Else

    \State $u_n \gets v_m$
    \State $n\gets n+1$
    \State $m\gets m+1$

  \EndIf

  \Else
    \State Calculate $P(U_n|y_1^N,u_{1}^{n-1})$
    \State Randomly set $u_n$ according to the probability distribution $P(U_n|s_1^N,u_{1}^{n-1})$. That is
    \begin{align}
    u_n = \left\{
    \begin{array}{ll}
    0, &  \mbox{with probability } P(U_n=0|s_1^N,u_{1}^{n-1})\\ \notag
    1, &  \mbox{with probability } P(U_n=1|s_1^N,u_{1}^{n-1}) \notag
    \end{array}
    \right.
    \end{align}

    \State $n\gets n+1$

  \EndIf

\Until{$n>N$}
\State $x_{1}^{N} \gets u_{1}^{N} \left(G_N\right)^{-1}$
\State The algorithm outputs $x_{1}^{N}$ as the WOM codeword
\end{algorithmic}
\end{algorithm}

The encoding algorithm in Algorithm \ref{polar_encoding_algorithm} takes the inputs including the current memory states of the memory cells $s_1^N$, the high-entropy set ${\mathcal F}_N$, ${\mathcal G}_N$, the to-be-recorded message $v_1^M$, and the freeze bits $f_{1}^{L}$. The algorithm then determines $u_i$ one by one form $i=1$ to $i=N$. If $i\in {\mathcal G}_N$, then $u_i$ is set to one of the freeze bits $f_l$. The freeze bits $f_{1}^{L}$ are some random bits shared by the encoder and decoder. Actually, $f_{1}^{L}$ can be chosen to be deterministic signals without any performance degradation, as in many previous polar coding algorithms. If $i\in {F}_N\cap {\mathcal G}_N^{c}$, then $u_i$ is set to one of the to-be-recorded information bits $v_m$. If $i\notin F_N$, then $u_i$ is randomly set to 1 with probability $P(U_i=1|s_1^N,u_{1}^{i-1})$, and $u_i$ is randomly set to 0 with probability $P(U_i=0|s_1^N,u_{1}^{i-1})$. After all the $u_i$ for $i=1, \ldots, N$ have been determined, a vector $x_{1}^{N}$ is calculated as $x_{1}^{N} = u_{1}^{N} \left(G_N\right)^{-1}$. In other words, $u_{1}^{N} = x_{1}^{N} G_N$. The algorithm finally outputs $x_{1}^{N}$ as the codeword.

As shown in Algorithm \ref{polar_decoding_algorithm}, the decoding algorithm receives channel observation $y_{1}^{N}$, the high-entropy set ${\mathcal F}_N$, ${\mathcal G}_N$, and the freeze bits $f_{1}^{L}$. The algorithm then determines $u_i$ one by one form $i=1$ to $i=N$. If $i\in {\mathcal G}_N$, then $u_i$ is set to one of the freeze bits $f_l$. If $i\in {F}_N\cap {\mathcal G}_N^{c}$, then $u_i$ is decoded using the bit-wise maximal-likelihood estimation. That is,
\begin{align}
u_i = \left\{
\begin{array}{ll}
0,  & \mbox{if } P(U_i=0|y_1^N,u_{1}^{i-1}) > P(U_i=1|y_1^N,u_{1}^{i-1}) \\
1,  & \mbox{otherwise }
\end{array}
\right.
\end{align}
After all the $u_i$ have been determined, the recorded information bits $v_1,v_2,\ldots,v_M$ have already be recovered, because they are just the bits of $u_{1}^{N}$ at the positions in ${\mathcal F}_N$.

\begin{algorithm}
\caption{Joint WOM  and error-control decoding}\label{polar_decoding_algorithm}
\begin{algorithmic}[1]
\State The algorithm takes inputs
\begin{itemize}
\item the channel observations $y_1^N$
\item the high-entropy set ${\mathcal F}_N$
\item the high-entropy set ${\mathcal G}_N$
\item the to-be-decoded message $v_1^M$
\item freeze bits $f_{1}^{L}$
\end{itemize}
\State{$n\gets 1, m\gets 1, l\gets 1$}
\Repeat

    \If{ $n\in {\mathcal G}_N$}

      \State $u_n \gets f_l$
      \State $n\gets n+1$
      \State $l\gets l+1$
  \Else

  \State Set $u_n$ according to the probability distribution $P(U_n|y_1^N,u_{1}^{n-1})$. If $P(U_n=0|y_1^N,u_{1}^{n-1}) > P(U_n=1|y_1^N,u_{1}^{n-1})$, then
         set $u_n=0$. Otherwise, set $u_n=1$.

  \State $n\gets n+1$

  \If{ $n\in {\mathcal F}_N$}

    \State $v_m \gets u_n $

    \State $m\gets m+1$

  \EndIf

  \EndIf

\Until{$n>N$}
\State The algorithm outputs $v_{1}^{M}$ as the decoded message
\end{algorithmic}
\end{algorithm}

\section{Sufficient and Necessary Condition for Less Noisy}

\label{section_less_noisy}

In this section, we prove a sufficient and necessary condition for the noisy reading channel being less noisy than the test channel in data encoding. The sufficient and necessary condition is usually satisfied, unless the flash memory reading (sensing) process is extremely noisy, which may not be the case for most practical systems. As a consequence of the sufficient and necessary condition, we have that ${\mathcal G}_N \subset {\mathcal F}_N$ in Algorithms \ref{polar_encoding_algorithm} and \ref{polar_decoding_algorithm}. Therefore, the WOM coding scheme considered in Section \ref{section_basic_scheme} can be applied for most practical scenarios, despite being a very simple coding scheme.

We consider the joint probability distribution of $S, X, Y$ defined by  Eqs. \ref{model_p_2}, \ref{model_p_3}, and the channel conditional probability distribution $P(Y_i|X_i)$. The probability model is illustrated in Fig. \ref{basic_system_model}, where the probability distribution can be factored as $P(S,X,Z) = P(S)P(X|S)P(Y|X)P(Z|X)$. We may consider a virtual noisy channel, where the random variable $X$ is the input and the random variable $S$ is the output. We call such a channel as the test channel in data encoding and denote it by $X\rightarrow S$. Certainly there exists another noisy channel, where the random variable $X$ is the input and the random variable $Y$ is the output. We call such a channel as the noisy reading channel and denote it by $X\rightarrow Y$. We will prove that the channel $X\rightarrow Y$ is less noisy than the channel $X\rightarrow S$, if and only if $I(X;Y) \geq I(X;S)$.

\begin{figure}[h]
 \centering
 \includegraphics[width=3in]{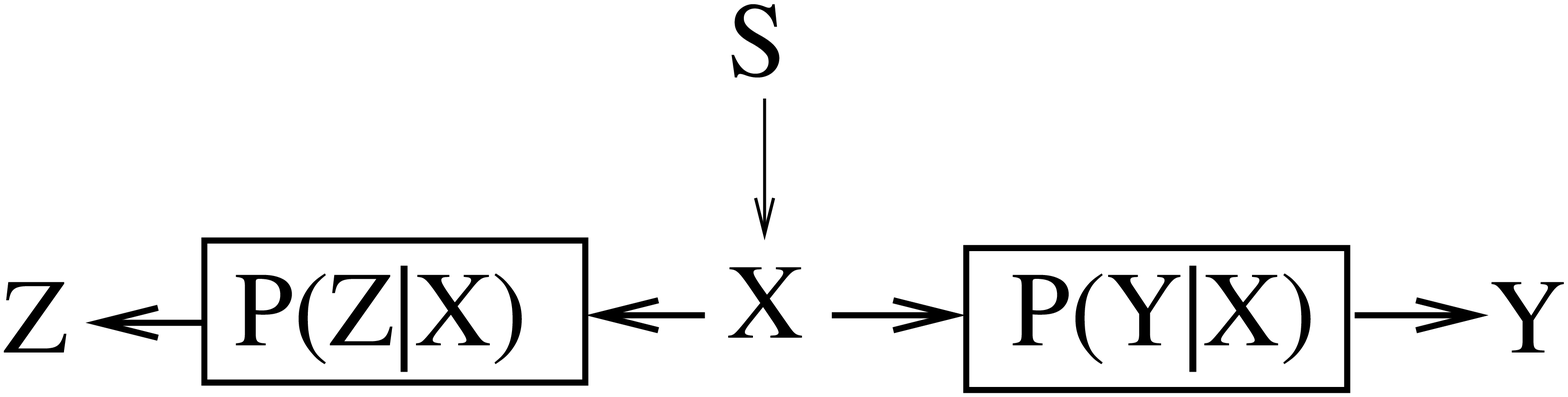}
 \caption{Probability distribution of the considered random variables}
 \label{basic_system_model}
\end{figure}

\begin{definition}
Let ${\mathcal W}: {X} \rightarrow { Y}$ denote the channel with input symbol $X$ and output symbol $Y$. Let ${\mathcal V}: { X} \rightarrow { S}$ denote the channel with input symbol $X$ and output symbol $S$. We say $\mathcal W$ is less noisy than $\mathcal V$ (denoted by ${\mathcal W} \succeq {\mathcal V}$), if $I(Z;Y)\geq I(Z;S)$ for all distribution $P_{Z,X}$, where $Z$ has finite support and $Z\rightarrow X \rightarrow (Y,S)$ form a Markov chain.
\end{definition}

Less noisy has been discussed extensively in the past mainly for the research on broadcast channels. We refer interested readers to \cite{sutter14} and references therein for these previous research.

\begin{theorem}
\label{less_noisy_theorem}
Assume the joint probability distribution of $S,X,Y$ as defined in Eqs. \ref{model_p_2}, \ref{model_p_3}, and the channel conditional probability distribution $P(Y_i|X_i)$. The channel $X\rightarrow Y$ is less noisy than the channel $X\rightarrow S$, if and only if $I(X;Y) \geq I(X;S)$.
\end{theorem}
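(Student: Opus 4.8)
The plan is to reduce the less-noisy relation---which quantifies over all auxiliary inputs $Z$---to a single concavity property of a mutual-information difference, and then to exploit the very special form that the model forces on the test channel $X \rightarrow S$. The necessity of the stated condition is immediate and requires nothing about the channels: taking $Z = X$ makes $Z \rightarrow X \rightarrow (Y,S)$ a trivial Markov chain, so ${\mathcal W} \succeq {\mathcal V}$ specializes to $I(X;Y) = I(Z;Y) \ge I(Z;S) = I(X;S)$ at the input distribution induced by Eqs. \ref{model_p_2} and \ref{model_p_3}. The whole difficulty therefore lies in the converse.

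For the converse I would first establish the standard functional characterization of the less-noisy order (see \cite{sutter14} and references therein). Fixing the two channels $P(Y|X)$ and $P(S|X)$ and regarding $T(P_X) := I(X;Y) - I(X;S)$ as a function of the input distribution $P_X$, the claim is that ${\mathcal W} \succeq {\mathcal V}$ holds if and only if $T$ is concave in $P_X$. This is proved directly from the definition: for any $Z \rightarrow X \rightarrow (Y,S)$ the chain rule together with $I(Z;Y|X) = 0$ gives $I(Z;Y) = I(X;Y) - I(X;Y|Z)$, and likewise for $S$, so that $I(Z;Y) - I(Z;S) = T(P_X) - \sum_z P(z)\,T(P_{X|Z=z})$ with $P_X = \sum_z P(z) P_{X|Z=z}$. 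Demanding nonnegativity of this expression for every way of splitting $P_X$ into a mixture is exactly the statement that $T$ is concave.

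Since $X$ is binary, $P_X$ is the scalar $q = P(X=1)$ and concavity reduces to $T''(q) \le 0$ on $[0,1]$. Here I would use that $H(Y|X)$ and $H(S|X)$ are linear in $q$, so $T''(q) = \frac{d^2}{dq^2}\bigl[H(Y) - H(S)\bigr]$, and that each output-entropy curvature has the chi-square form $\frac{d^2}{dq^2}H(Y) = -\sum_y \frac{(P(y|1)-P(y|0))^2}{P_Y(y)}$, and similarly for $S$. The key structural input is that Eq. \ref{model_p_3} forces $X \rightarrow S$ to be a Z-channel: because $P(X=1|S=0) = 0$ we get $P(S=1|X=1) = 1$, so $S$ is deterministic at $X = 1$ and the curvature of $H(S)$ has a closed form with a boundary singularity as $q \to 1$. (Stochastic degradedness of $X \rightarrow S$ relative to $X \rightarrow Y$ would be a convenient sufficient condition, but it is far stronger than the hypothesis, so the concavity route via $T$ seems unavoidable.)

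The step I expect to be the main obstacle is bridging the pointwise hypothesis $I(X;Y) \ge I(X;S)$---which is evaluated only at the model's input $q^{\ast} = (1-\gamma)(1-\beta)$---and the global concavity of $T$ on the entire interval. I would anchor the analysis on the boundary values $T(0) = T(1) = 0$, both mutual informations vanishing at a deterministic input, and try to argue that the Z-channel form of $X \rightarrow S$ constrains the shape of $T$ enough that its sign at the interior point $q^{\ast}$ controls its concavity. The delicate regime is $q \to 1$: there the Z-channel makes $\frac{d^2}{dq^2}H(S) \to -\infty$, so concavity of $T$ near the boundary hinges on the reading channel's output-entropy curvature diverging at least as fast, and showing that this interplay is in fact governed by the single scalar inequality $I(X;Y) \ge I(X;S)$ is where I expect the real technical effort to concentrate.
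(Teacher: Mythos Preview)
Your necessity argument matches the paper's. For sufficiency, however, the paper does not touch the concavity characterization at all: it rewrites $I(Z;Y)-I(Z;S)=I(Z;X|S)-I(Z;X|Y)$ and proves the right-hand side is nonnegative by an \emph{operational} argument. An auxiliary state-dependent channel $Q(Z\mid X,S)$ is introduced; a converse (Lemma~\ref{channel_capacity_lemma}) shows that its constrained capacity with state known at both terminals equals $I(Z;X|S)$, and a block-Markov achievability scheme (Lemma~\ref{lemma_less_noisy_rate}) shows that the rate $I(Z;X|Y)$ is achievable for the same channel when an extra common-randomness signal $Y_1^N$ is supplied. The hypothesis $I(X;Y)\ge I(X;S)$ is used only once, to bound a single error event in that achievability proof. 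Since common randomness cannot increase capacity, $I(Z;X|Y)\le I(Z;X|S)$ follows.

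The gap you flag in your own plan is real and, in fact, fatal to the global-concavity route. Your curvature formula gives $\tfrac{d^2}{dq^2}H(S)=\alpha^2\,h''\bigl((1-q)\alpha\bigr)$ for the Z-channel, which tends to $-\infty$ as $q\to 1$, whereas for any reading channel with $P(y\mid 0),P(y\mid 1)\in(0,1)$ the quantity $\tfrac{d^2}{dq^2}H(Y)$ stays bounded. Hence $T''(q)\to +\infty$ as $q\to 1$: $T$ is strictly convex near that endpoint, so global concavity of $T$---and with it the standard less-noisy order over \emph{all} $P_{Z,X}$---cannot hold for a non-degenerate reading channel, irrespective of the sign of $T(q^{\ast})$. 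The paper avoids this obstruction by anchoring the entire argument at the model's input law: Lemmas~\ref{channel_capacity_lemma} and~\ref{lemma_less_noisy_rate} are stated and proved under the joint distribution of Eqs.~\ref{model_p_2}--\ref{model_p_3}, which is precisely what the downstream Theorem~\ref{entropy_set_theorem} needs. If you wish to stay analytic, you must target that restricted statement (mixtures with barycenter $q^{\ast}$) rather than global concavity of $T$.
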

\begin{proof}
For the necessary part, note that we may let $Z=X$. Then, we have $I(X;Y) \geq I(X;S)$. The necessary part is thus proven.

For the sufficient part, we show that the proof of the theorem can be reduced to the proof of Lemma \ref{less_noisy_lemma}. We have
\begin{align}
& I(Z;S,X) = I(Z;X) + I(Z;S|X) \stackrel{(a)}{=} I(Z;X)  \\
& = I(Z;S) + I(Z;X|S) \\
& I(Z;Y,X) = I(Z;X) + I(Z;Y|X) \stackrel{(b)}{=} I(Z;X)  \\
& = I(Z;Y) + I(Z;X|Y)
\end{align}
where, (a) and (b) follow from the fact that $Z\leftarrow X \leftarrow S,Y$ form a Markov chain.
Thus, all we need to show is
\begin{align}
\label{less_noisy_inequ}
I(Z;Y) - I(Z;S) = I(Z;X|S) - I(Z;X|Y) \geq 0
\end{align}
which follows from Lemma \ref{less_noisy_lemma}.
\end{proof}

\begin{lemma}
\label{less_noisy_lemma}
Assume the joint probability distribution of $S,X,Y$ as defined in Eqs. \ref{model_p_2}, \ref{model_p_3}, and the channel conditional probability distribution $P(Y_i|X_i)$. If $I(X;Y) \geq I(X;S)$, then
\begin{align}
\label{less_noisy_inequ}
I(Z;X|S) - I(Z;X|Y) \geq 0
\end{align}
\end{lemma}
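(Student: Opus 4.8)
The plan is to reduce the inequality to a one‑dimensional concavity (supporting‑line) statement for a single auxiliary function, and then to exploit the degenerate structure of the test channel. Throughout, write $q=P(X=1)=(1-\beta)(1-\gamma)$, let $X_p$ denote a Bernoulli input with $P(X_p=1)=p$, and set $F(p)=I(X_p;Y)-I(X_p;S)$, where $P(Y|X)$ and $P(S|X)$ are the fixed conditionals induced by Eqs.~\ref{model_p_2} and \ref{model_p_3}. Two elementary facts drive everything. First, $F(0)=F(1)=0$, since a deterministic input carries no mutual information through either channel. Second, the hypothesis $I(X;Y)\ge I(X;S)$ is precisely $F(q)\ge 0$.

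First I would establish the decomposition
\begin{align}
I(Z;X|S)-I(Z;X|Y)=F(q)-\sum_{z}P(Z=z)\,F(p_z),
\end{align}
where $p_z=P(X=1|Z=z)$. This follows from the Markov chains $Z\to X\to S$ and $Z\to X\to Y$, which give $I(Z;S)=I(X;S)-I(X;S|Z)$ and $I(Z;Y)=I(X;Y)-I(X;Y|Z)$, together with $I(X;S|Z)=\sum_z P(Z=z)I(X_{p_z};S)$ and $I(X;Y|Z)=\sum_z P(Z=z)I(X_{p_z};Y)$, and finally the identity $I(Z;X|S)-I(Z;X|Y)=I(Z;Y)-I(Z;S)$ already used in the proof of Theorem~\ref{less_noisy_theorem}. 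Since $\sum_z P(Z=z)\,p_z=q$, the desired conclusion for every admissible $Z$ is equivalent to the statement that $F$ lies below some line through $(q,F(q))$ on all of $[0,1]$, i.e. that $F$ admits a supporting line from above at $p=q$.

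The main obstacle is that $F$ need not be globally concave. Because $P(X=1|S=0)=0$ in Eq.~\ref{model_p_3}, the test channel $X\to S$ is a Z‑channel, so $I(X_p;S)$ has an explicit concave form and $F=I(X_p;Y)-I(X_p;S)$ is a difference of concave functions that can become convex near $p=1$. The crux is therefore to show that, under $F(q)\ge 0$ and $F(0)=F(1)=0$, the point $q$ is nonetheless a contact point of the concave envelope of $F$. My plan is to locate $q$ relative to the shape of $F$: I would show that the Z‑channel form of $I(X_p;S)$ forces $F$ to be positive and concave on an initial interval $[0,p^{*}]$ containing $q$, so that the envelope follows $F$ through $q$, while any convex dip of $F$ occurs only in the region where $F<0$, so that the chord from the peak to $(1,0)$ stays above $F$ there. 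Equivalently, and perhaps more transparently, one may recast the target as $\sum_s P(s)\phi(P(X=1|s))\ge\sum_y P(y)\phi(P(X=1|y))$ for every information functional $\phi(p)=I(X_p;Z)$; here the left side is governed by the two‑point posterior $\mu=\beta\delta_0+(1-\beta)\delta_{1-\gamma}$ of $X$ given $S$, the right side by the fixed posterior $\nu$ of $X$ given $Y$, and the hypothesis reads $\sum_y P(y)h(P(X=1|y))\le(1-\beta)h(1-\gamma)$ for the binary entropy $h$. I expect the hardest step to be bootstrapping this single concave inequality (for $h$) up to the entire family of information functionals $\phi$, using that $h$ is the extremal (noiseless) functional and that $\mu$ is supported on only two points.
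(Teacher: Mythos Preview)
Your decomposition $I(Z;X|S)-I(Z;X|Y)=F(q)-\sum_z P(Z=z)F(p_z)$ with $\sum_z P(Z=z)p_z=q$ is correct, and the reformulation as ``$(q,F(q))$ lies on the upper concave envelope of $F$'' is the right binary-input restatement. Your alternative version, $\int\phi\,d\mu\ge\int\phi\,d\nu$ for every information functional $\phi(p)=I(X_p;Z)$ with $\mu=\beta\delta_0+(1-\beta)\delta_{1-\gamma}$ and $\nu$ the posterior law of $X$ given $Y$, is likewise a faithful restatement. The gap is that you never prove the step you yourself flag as hardest. Your shape claims about $F$ are unjustified: you assert that $F$ is positive and concave on an initial interval containing $q$ and that any convex region of $F$ lies inside $\{F<0\}$, but for an arbitrary reading channel $P(Y|X)$ the function $I(X_p;Y)$ is constrained only by concavity, so nothing you have written rules out, say, a local maximum of $F$ to the right of $q$ exceeding $F(q)$, which would put $(q,F(q))$ strictly below a chord. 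In the dual picture, the hypothesis is the single inequality $\int h\,d\nu\le\int h\,d\mu$; you need it for the whole cone of information functionals, and no mechanism for that bootstrapping is offered beyond the remark that $h$ is extremal and $\mu$ is two-point.

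The paper sidesteps this analytic difficulty with an operational argument that is quite different from yours. It introduces an auxiliary state-dependent channel $Q(Z|X,S)$ with $S$ known non-causally at both encoder and decoder, proves (Lemma~\ref{channel_capacity_lemma}) that its cost-constrained capacity is exactly $I(Z;X|S)$, and then constructs (Lemma~\ref{lemma_less_noisy_rate}) an explicit block-Markov/binning scheme achieving rate $I(Z;X|Y)$ on the same channel when an additional common-randomness signal $Y_1^N$ drawn from $P(Y|S)$ is shared; the hypothesis $I(X;Y)\ge I(X;S)$, equivalently $H(X|Y)\le H(X|S)$, is precisely what makes the bin-index error term vanish. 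Since common randomness cannot increase capacity, $I(Z;X|Y)\le I(Z;X|S)$. This route never analyzes the curvature of $F$ at all, and it is what you are missing: either a comparably strong idea to force the envelope statement, or the operational detour the paper takes.
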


 \begin{figure}[h]
 \centering
 \includegraphics[width=2in]{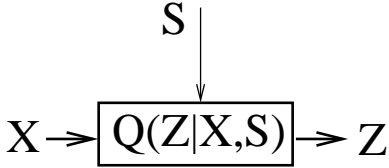}
 \caption{The auxiliary channel model}
 \label{auxiliary_system_model}
\end{figure}

In the proof of Lemma \ref{less_noisy_lemma}, we need to introduce an auxiliary channel model as shown in Fig.
 \ref{auxiliary_system_model}, where
\begin{align}
Q(Z|X,S) = \left\{
\begin{array}{ll}
P(Z|X=0),  & \mbox{for } S=0 \\
P(Z|X),  & \mbox{for } S=1
\end{array}
\right.
\end{align}
Clearly, this channel is a channel with channel state $S$. We will consider a coding scenario, where the state information $S$ is non-causally available to both the encoder and decoder as shown in Fig. \ref{side_information_case}. That is, the side information vector $s{1}^{N}$ is given to both the encoder and decoder before the $N$ channel uses.

\begin{figure}[h]
 \centering
 \includegraphics[width=2in]{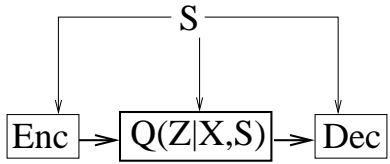}
 \caption{The coding scenario that the channel state information $S$ is non-causally available to both the encoder and decoder}
 \label{side_information_case}
\end{figure}

\begin{lemma}
\label{channel_capacity_lemma}
Assume the joint probability distribution of $S,X,Y$ as defined in Eqs. \ref{model_p_2}, \ref{model_p_3}, and the channel conditional probability distribution $P(Y_i|X_i)$.
Assume that the channel state information $S$ is non-causally available to both the encoder and decoder. Assume that the coding scheme must satisfy a constraint that $\sum_{i=1}^{N}c(s_i,x_i)=N(1-\beta)\gamma(1-\epsilon_N)$, where $N$ is the blocklength of the coding scheme, $\epsilon_N$ goes to zero as $N$ goes to infinity and
\begin{align}
c(s_i,x_i) = \left\{
\begin{array}{ll} 1, & \mbox{if } s_i = 1, x_i =0 \\
0, & \mbox{otherwise }
\end{array}
\right.
\end{align}
Under the above constraint, the capacity (maximal achievable rate) of the channel in Fig. \ref{side_information_case} is equal to $I(Z;X|S)$.
\end{lemma}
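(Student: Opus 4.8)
The plan is to identify Lemma~\ref{channel_capacity_lemma} as an instance of the capacity of a discrete memoryless channel with state known non-causally at \emph{both} the encoder and the decoder, under an equality input-cost constraint, and then to evaluate that capacity for the particular channel $Q(Z|X,S)$. Because $S$ is available at both terminals (Fig.~\ref{side_information_case}), Gelfand--Pinsker binning is unnecessary: I would condition on the realized state sequence $s_1^N$ and split the $N$ channel uses into $\mathcal{I}_0=\{i:s_i=0\}$ and $\mathcal{I}_1=\{i:s_i=1\}$. Since the $S_i$ are iid, the law of large numbers gives $|\mathcal{I}_1|/N\to 1-\beta$ with high probability, and both terminals know this partition.

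Two observations then drive the argument. First, on $\mathcal{I}_0$ the channel law is $Q(Z|X,S{=}0)=P(Z|X{=}0)$, which does not depend on the input; these positions form a useless channel and contribute $I(X;Z|S{=}0)=0$. Second, on $\mathcal{I}_1$ the channel is the DMC $P(Z|X)$, and the cost constraint fixes the input composition there: $\sum_i c(s_i,x_i)$ counts exactly the indices of $\mathcal{I}_1$ with $x_i=0$, so the equality $\sum_i c(s_i,x_i)=N(1-\beta)\gamma(1-\epsilon_N)$ together with $|\mathcal{I}_1|\approx N(1-\beta)$ forces the empirical fraction of zeros on $\mathcal{I}_1$ to converge to $\gamma$, i.e. to the model law $P(X{=}0|S{=}1)=\gamma$.

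For achievability I would apply constant-composition random coding over the $\mathcal{I}_1$ positions with composition $(\gamma,1-\gamma)$, which transmits reliably at any rate below $I(X;Z|S{=}1)$ bits per $\mathcal{I}_1$-use; since $|\mathcal{I}_1|\approx N(1-\beta)$, the overall rate tends to $(1-\beta)I(X;Z|S{=}1)=I(Z;X|S)$, with the vanishing slack $\epsilon_N$ absorbing the fluctuations of the composition and of $|\mathcal{I}_1|$. For the converse I would use Fano's inequality to write $NR\le I(X_1^N;Z_1^N|S_1^N)+N\delta_N$ with $\delta_N\to 0$, single-letterize via the memorylessness of the channel given the state to get $I(X_1^N;Z_1^N|S_1^N)\le\sum_i I(X_i;Z_i|S_i)$, and then use concavity of mutual information in the input distribution (for a fixed channel, as in \cite{cover06}) together with the cost constraint, which pins the averaged input law on $S{=}1$ to $P(X{=}0|S{=}1)\to\gamma$. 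The $S{=}0$ terms vanish and the $S{=}1$ terms are bounded by $I(X;Z|S{=}1)$ at the model law, yielding $R\le I(Z;X|S)+o(1)$ and matching the achievability bound.

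The step I expect to be the main obstacle is the bookkeeping around the \emph{equality} constraint: I must show that the exact constraint (up to $\epsilon_N$), rather than a mere inequality, together with the concentration of $|\mathcal{I}_1|$, really collapses the feasible set of input compositions to the single model law, so that the usual maximization over input distributions degenerates to the single value $I(Z;X|S)$, and that the constant-composition achievability and the concavity-based converse both converge to this same value as $N\to\infty$ without leaving a gap. The conceptual core that makes this routine is the decomposition into the useless $S{=}0$ sub-channel and the composition-pinned $S{=}1$ sub-channel.
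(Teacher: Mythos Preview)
Your proposal is correct and follows essentially the same route as the paper: random coding for achievability, and Fano's inequality plus single-letterization for the converse, with the cost constraint pinning the averaged input law on $S=1$ to $P(X{=}0|S{=}1)\approx\gamma$. The only presentational difference is that the paper implements the concavity step via the standard time-sharing auxiliary variable $\mathcal{U}$ (uniform on $\{1,\dots,N\}$, with $\mathcal{S}=S_{\mathcal{U}}$, $\mathcal{X}=X_{\mathcal{U}}$, $\mathcal{Z}=Z_{\mathcal{U}}$) to pass from $\sum_i I(Z_i;X_i|S_i)$ to $N\,I(\mathcal{Z};\mathcal{X}|\mathcal{S})$, whereas you invoke concavity directly after the explicit $\mathcal{I}_0/\mathcal{I}_1$ split; your observation that the $S{=}0$ sub-channel is useless (since $Q(Z|X,S{=}0)$ is input-independent) is exactly what makes the paper's final identification of $(\mathcal{S},\mathcal{X},\mathcal{Z})$ with the model distribution go through, though the paper leaves that point implicit.
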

\begin{proof}
In the achievable part of the proof, we need to show that the rate $I(Z;X|S)$ is achievable. This part of the proof can be done by using the standard random coding arguments as for example in \cite{cover06}. This part is omitted in this paper due to the space limitation.

In the converse part of the proof, we consider any random codeword vector $X_{1}^{N}$, which achieves the constrained channel capacity as above. Assume $Z_{1}^{N}$ is the corresponding random received channel observation vector. The achievable rate $R$ can be thus bounded as follows.
\begin{align}
NR & \leq  H(X_{1}^{ N} | S_{1}^{N}) = I(Z_{1}^{ N};X_{1}^{ N} | S_{1}^{N}) + P(X_{1}^{ N}|Z_{1}^{ N}, S_{1}^{N}) \notag \\
& \stackrel{(a)}{\leq}  I(Z_{1}^{ N};X_{1}^{ N}| S_{1}^{N}) + 1 + N P_e \notag \\
&=  H(Z_{1}^{N}|S_{1}^{N}) -  H(Z_{1}^{ N}|X_{1}^{ N},S_{1}^{N}) + 1 + NP_e \notag \\
& =  \sum_{i=1}^{ N} H(Z_{i}|Z_{1}^{i-1},S_{1}^{N}) \notag \\
& -  \sum_{i=1}^{ N}H(Z_{i}|Z_{1}^{i-1},X_{1}^{N},S_{1}^{N}) + 1 + NP_e \notag \\
& \stackrel{(b)}{\leq}  \sum_{i=1}^{ N} H(Z_{i}|,S_i) -  \sum_{i=1}^{ N}H(Z_{i}|X_{i},S_{i}) + 1 + NP_e \notag \\
& =  \sum_{i=1}^{ N} I(Z_{i};X_{i}|S_i) + 1 + NP_e
\end{align}
In the above inequality, (a) follows from the Fano's inequality \cite[page 39]{cover06}, where $P_e$ is the decoding error probability; and (b) follows from the fact that conditions decrease entropy and $Z_i$ is independent of all other random variables conditioned on $X_i$.

Now we introduce a random variable $\mathcal{U}$, such that ${\mathcal U}=i$ for $i=1,\ldots, N$ with equal probability. Let $\mathcal{S}$, $\mathcal{X}$, $\mathcal{Z}$ be random variables, such that $\mathcal{S}=S_i$, $\mathcal{X}=X_i$ and $\mathcal{Z}=Z_i$, if $\mathcal{U}=i$. By definition, we have
\begin{align}
\sum_{i=1}^{ N} I(Z_{i};X_{i}|S_i) = N I(\mathcal{Z};\mathcal{X}|\mathcal{S}, \mathcal{U})
\end{align} 
Note that $I(\mathcal{Z};\mathcal{U} | \mathcal{X},\mathcal{S})=0$, therefore
\begin{align}
I(\mathcal{Z};\mathcal{X},\mathcal{U}|\mathcal{S}) & =  I(\mathcal{Z};\mathcal{X}|\mathcal{S})  \\
& = I(\mathcal{Z};\mathcal{U}|\mathcal{S})  + I(\mathcal{Z};\mathcal{X} | \mathcal{U},\mathcal{S})
\end{align}
Because $I(\mathcal{Z};\mathcal{U}|\mathcal{S})\geq 0$, we have
\begin{align}
I(\mathcal{Z};\mathcal{X} | \mathcal{U},\mathcal{S}) \leq  I(\mathcal{Z};\mathcal{X}|\mathcal{S})
\end{align}

The converse part of the proof is done by noting that the random variables $\mathcal{Z}, \mathcal{X},\mathcal{S}$ have exactly the same probability distribution as in Eqs. \ref{model_p_2}, \ref{model_p_3}, and the channel conditional probability distribution $P(Y_i|X_i)$.
\end{proof}

\begin{lemma}
\label{lemma_less_noisy_rate}
Consider the channel with channel state in Fig. \ref{auxiliary_system_model}. Assume the joint probability distribution of $S,X,Y$ as defined in Eqs. \ref{model_p_2}, \ref{model_p_3}, and the channel conditional probability distribution $P(Y_i|X_i)$. Assume that the state information $S_{1}^{N}$ is i.i.d. generated and revealed to both the encoder and decoder before each block of $N$ channel uses. In addition, a random vector $Y_{1}^{N}$ is also i.i.d generated according to the probability distribution $P(Y|S)$. The random vector $Y_{1}^{N}$ is provided to both the encoder and decoder before the block of $N$ channel uses. If $I(X;Y)\geq I(X;S)$, then $I(U;X|Y)$ is an achievable rate for the channel model in Fig. \ref{auxiliary_system_model}.
\end{lemma}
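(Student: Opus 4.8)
The plan is to prove the lemma by a random-coding (joint-typicality) argument for the state-dependent channel $Q(Z|X,S)$ of Fig.~\ref{auxiliary_system_model}, treating $Y_1^N$ as decoder side information and using the state sequence $S_1^N$ only to enforce the write-cost constraint of Lemma~\ref{channel_capacity_lemma}. The idea is to show that the message rate $I(X;Y,Z)-I(X;S)$ is achievable, and then to observe that the hypothesis $I(X;Y)\ge I(X;S)$ forces this quantity to dominate $I(Z;X|Y)$, which is the target rate in the statement (the symbol $U$ there playing the role of the channel output $Z$).

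First I would fix the single-letter law $P(S,X,Y,Z)$ induced by Eqs.~\ref{model_p_2}, \ref{model_p_3}, the reading channel $P(Y|X)$ and the auxiliary channel $Q(Z|X,S)$, and generate a codebook of $2^{N(R+R_b)}$ sequences $x_1^N(w,k)$ drawn i.i.d.\ from the marginal $P(X)$, arranged as $2^{NR}$ message bins each holding $2^{NR_b}$ covering sequences. The encoder, given $s_1^N$, $y_1^N$ and message $w$, searches its bin for an index $k$ with $x_1^N(w,k)$ jointly typical with $s_1^N$ under $P(S,X)$; by the covering lemma this succeeds with high probability once $R_b>I(X;S)$. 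The decoder, observing $(y_1^N,z_1^N)$, outputs the unique codeword jointly typical with $(y_1^N,z_1^N)$ under $P(X,Y,Z)$; by the packing lemma the probability of a competing codeword vanishes provided $R+R_b<I(X;Y,Z)$. Eliminating $R_b$ yields the achievable message rate $R<I(X;Y,Z)-I(X;S)$.

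Two points then finish the argument: the cost constraint and the rate comparison. Since the selected $x_1^N$ is jointly $P(S,X)$-typical with $s_1^N$, the number of coordinates with $s_i=1,x_i=0$ concentrates on $N\,P(S=1,X=0)=N(1-\beta)\gamma$, so the constraint $\sum_i c(s_i,x_i)=N(1-\beta)\gamma(1-\epsilon_N)$ required by Lemma~\ref{channel_capacity_lemma} holds with $\epsilon_N\to 0$. For the rate, the chain rule gives $I(X;Y,Z)-I(X;S)=I(Z;X|Y)+\big(I(X;Y)-I(X;S)\big)$, so under the hypothesis $I(X;Y)\ge I(X;S)$ the achievable region contains $R=I(Z;X|Y)$, which proves the lemma. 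I expect the main obstacle to be the joint handling of covering and packing under the hard cost constraint: one must check that the covering step operates inside each message bin without disturbing the i.i.d.\ $P(X)$ statistics on which the packing bound relies, and that the probabilities of covering failure, an atypical $s_1^N$, and erroneous decoding can all be driven to zero simultaneously. The displayed identity is precisely what converts the operational hypothesis $I(X;Y)\ge I(X;S)$ into the target rate $I(Z;X|Y)$.
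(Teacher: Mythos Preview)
Your Gelfand--Pinsker scheme has a genuine gap at the point where the packing lemma is invoked. In the lemma's setup the vector $Y_1^N$ is \emph{not} the output of the reading channel applied to the transmitted codeword; it is generated from $S_1^N$ through the marginal $P(Y|S)$ before the encoder acts. Your encoder selects $X_1^N$ to be jointly typical with $S_1^N$ only, so in the scheme $X_1^N$ and $Y_1^N$ are conditionally independent given $S_1^N$. The induced single-letter law is therefore $P(S)P(X|S)P(Y|S)P(Z|X)$, not the model law $P(S)P(X|S)P(Y|X)P(Z|X)$; in particular $(X,Y)$ does not have the marginal $P(X,Y)$ (take the extreme case $Y=X$: your scheme then produces pairs with $X\neq Y$ with positive probability). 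Hence the transmitted codeword is \emph{not} jointly typical with $(y_1^N,z_1^N)$ under $P(X,Y,Z)$, and the packing bound $R+R_b<I(X;Y,Z)$ is unjustified. The natural repair---cover with respect to $(S,Y)$ so that $(S,X,Y)$ does follow the model law---costs $R_b>I(X;S,Y)$ and yields only $R<I(X;Z|S,Y)\le I(X;Z|Y)$ (or $R<I(X;Z|Y)-I(X;S|Y)$ if the decoder drops $S$), which falls short of the target rate.

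The paper takes a different route that sidesteps this issue. The auxiliary codebook is generated \emph{conditionally} on the revealed $y_1^N$, each symbol $W_i$ drawn from $P(X|y_i,S=1)$, and then $X_i$ is set to $0$ when $s_i=0$ and to $W_i$ when $s_i=1$; this forces $(S,X,Y)$ to follow the model law by construction. A block-Markov layer carries a Slepian--Wolf bin index of size $2^{N H(X|Y,Z)}$ to the next block, so the decoder first recovers $X_1^N$ from $(Y_1^N,Z_1^N)$ and the bin index, and then recovers the auxiliary codeword $W_1^N$ from $(X_1^N,S_1^N)$. The codebook has size $2^{N H(X|Y)}$, so the net rate is $H(X|Y)-H(X|Y,Z)=I(Z;X|Y)$. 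The hypothesis $I(X;Y)\ge I(X;S)$ enters only in the second decoding step: it is equivalent to $H(X|Y)\le H(X|S)=I(W;X,S)$, which is exactly the condition for $W_1^N$ to be recoverable from $(X_1^N,S_1^N)$. Your chain-rule identity $I(X;Y,Z)-I(X;S)=I(Z;X|Y)+[I(X;Y)-I(X;S)]$ is correct, but the operational meaning you attach to its left side is not supported by the scheme you describe.
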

\begin{proof}
We will present a concrete coding schemes, such that the rate $I(Z;X|Y)$ is achieved. We consider a Markov encoding scheme as shown in Fig. \ref{coder_markov}, where the coding scheme uses $K$ blocks of channel uses, and each block contains $N$ channel uses. For each $k$-th channel use block, the transmit message $m$ consists of two parts $V_k$ and $M_k$. The two parts of messages $V_k$ and $M_k$ are encoded using one block encoder into an $N$-dimensional codeword $X_{1}^{N}$. The codeword $X_{1}^{N}$ is then transmitted through the noisy channel using $N$ channel uses.

\begin{figure}[h]
 \centering
 \includegraphics[width=3in]{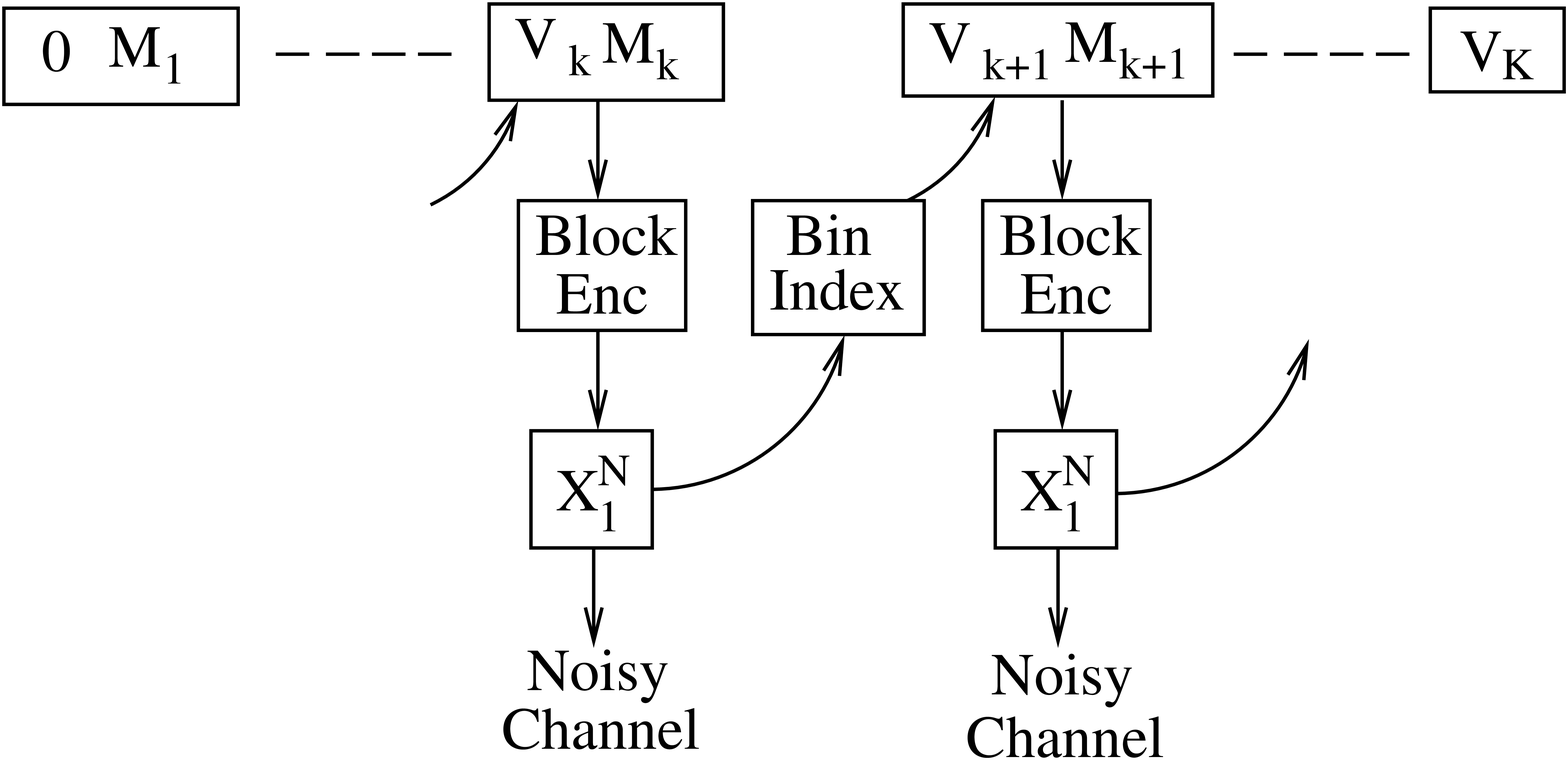}
 \caption{Block Markov coding}
 \label{coder_markov}
\end{figure}

\begin{figure}[h]
 \centering
 \includegraphics[width=3in]{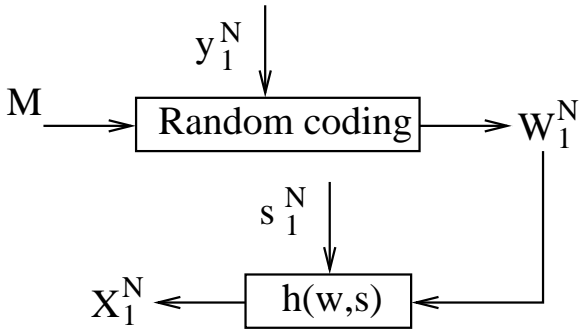}
 \caption{Random coding for each block}
 \label{coder_concrete}
\end{figure}

The details of the block encoding are explained as follows. In the block encoding scheme, an auxiliary codebook is first randomly generated after receiving the random signal $Y_{1}^{N}$. The auxiliary codebook contains $|M|$ codewords, where
\begin{align}
|M|=2^{N(H(X|Y)-3\epsilon_N)},
\end{align}
and $\epsilon_N$ goes to zero as $N$ goes to infinity. Each codeword is randomly generated independent of all other codewords. For each codeword, each codeword bit is randomly generated independent of all other codeword bits according to the probability distribution  $P(X|y_i,s_i=1)$.

After receiving the transmit message $m$, the encoding algorithm selects a codeword $W_{1}^{N}$ from the auxiliary codebook according to the transmit message $m$. The auxiliary codeword $w_{1}^{N}$ is then mapped into a codeword $x_{1}^{N}$ using the following function $h(w,s)$,
\begin{align}
x_i = \left\{
\begin{array}{ll}
0,  & \mbox{if } s_i=0 \\
w_i,  & \mbox{if } s_i=1
\end{array}
\right.
\end{align}
The codeword $x_{1}^{N}$ is then transmitted through the communication channel $Q(Z|X,S)$.

In addition, a bin index $s(x_{1}^{N})$ is generated as follows. We throw all the typical sequences in the typical set $A_{\epsilon_N}^{(N)}(X)$ of $X_{1}^{N}$ into $2^{N(H(X|Y,Z)+2\epsilon_N)}$ bins uniformly in random. The number $s(x_{1}^{N})$ is the index of the bin which contains the transmit codeword $x_{1}^{N}$. We then transmit this bin index during the next block of $N$ channel uses. Essentially, the message $V_{k+1}$ in the next block of channel uses is exactly  the bin index $s(x_{1}^{N})$.

Therefore, for each block of $N$ channel uses, the transmit message $m$ consists of two parts $V_k$, and $M_k$, where $V_k$ is a bin index for the codeword in the previous channel use block, and $M_k$ is the effective transmit message for the current block. The chain of coding is terminated by using one or several blocks to transmit the last bin index reliably (for example using a coding scheme as in Lemma \ref{channel_capacity_lemma}). For the first channel use block, there does not exist a previous bin index. In this case, we set $V_1$ to be an all zero bit string. For each block except the first and last several blocks, the transmit message $m$ contains $\log |M| = N (H(X|Y)-3\epsilon_N)$ bits, the bin index $V_k$ contains $N (H(X|Z,Y)+2\epsilon_N)$ bits, thus the effective transmit message $M_k$ contains approximately $NI(Z;X|Y)$ bits. The transmitting rate of this coding scheme goes to $I(Z;X|Y)$ as both $N$ and $K$ go to infinity.

The decoder may determine the transmitted messages $M$ by working backward by first determining the bin index $V_{k+1}$ and then determining the transmit $V_k,M_k$. Assume that $V_{k+1} = s(X_{1}^{N})$ is decoded correctly. In the step 1, the decoder tries to find a codeword $\hat{X}_{1}^{N}$, such that $\hat{X}_{1}^{N}$ is jointly typical with the channel observation $z_{1}^{N}$ and the received random signal $y_{1}^{N}$ and  $s(\hat{X}_{1}^{N})=s(X_{1}^{N})$.
In the step 2, the decoder tries to find a vector $\hat{W}_{1}^{N}$, such that $\hat{W}_{1}^{N}$ is jointly typical with the estimated codeword $\hat{X}_{1}^{N}$ and the channel state information $s_{1}^{N}$. And finally the decoder maps the codeword $\hat{W}_{1}^{N}$ back to the transmit message $V_k,M_k$. The decoded bin index $V_k$ can thus be used in decoding the previous transmit $V_{k-1},M_{k-1}$.

In the decoding process, there exist the following sources of decoding errors.
\begin{itemize}

\item The received channel state information $S_{1}^{N}$ and $Y_{1}^{N}$ are not typical. We denote this random event by $E_1$.

\item The transmit auxiliary codeword $W_{1}^{N}$ is not a typical sequence. We denote this random event by $E_2$.

\item In the signal transmission process, the channel observation $Z_{1}^{N}$ is not jointly typical with the transmit codeword $X_{1}^{N}$. We denote this random event by $E_3$.
 
\item There exist another typical sequence $\tilde{X}_{1}^{N}$, which has the same cell index as $X_{1}^{N}$. And $\tilde{X}_{1}^{N}$ is jointly typical with $Z_{1}^{N}$ and $Y_{1}^{N}$. We denote this random event by~$E_4$.

\item There exist another auxiliary codeword $\tilde{W}_{1}^{N}$, which is jointly typical with the codeword $X_{1}^{N}$ and channel state information $S_{1}^{N}$. We denote this random event by $E_5$.
\end{itemize}

Using a union bound, we have the decoding error probability
\begin{align}
P_e & \leq  P(E_1) + P(E_2) + P(E_3)  +  P(E_4)  + P(E_5)
\end{align}
In the sequel, we will show that all the 5 terms on the right hand of the above inequality go to zero as the block length $N$ goes to infinity.  It can be easily seen that $P(E_1)$, $P(E_2)$ and $P(E_3)$ go to zeros due to the well-known Asymptotic Equipartition Property (AEP) \cite[Page. 49]{cover06}.

The error probability $P(E_4)$ can be bounded as follows.
\begin{align}
P(E_4) & \stackrel{(a)}{\leq} 1 - \prod_{i=1}^{2^{N(H(X|Z,Y)+\epsilon_N)}}
\left[1-\frac{1}
{2^{N (H(X|Y,Z)+2\epsilon_N)}}\right] \notag \\ 
& \stackrel{(b)}{\leq} 2^{ N (H(X|Z,Y)+\epsilon_N)} 2^{- N (H(X|Z,Y)+2\epsilon_N)}  \notag \\
& \leq 2^{-N \epsilon_{N}}
\end{align}
In the above inequality, (a) follows from the fact that there are $2^{N(H(X|Z,Y)+2\epsilon_N)}$ bins, and there are at most $2^{N(H(X|Z,Y)+\epsilon_N)}$ typical sequences $\tilde{X}_{1}^{N}$, which are jointly typical with $z_{1}^{N}$ and $y_{1}^{N}$. And $(b)$ follows from the inequality $(1-x)^n \geq 1-nx$.
It can be seen clearly that $P(E_4)$ goes to zero as $N$ goes to infinity.

Similarly, we have
\begin{align}
P(E_5) & \stackrel{(a)}{\leq} 1 - \prod_{i=1}^{|M|-1}
\left[1-\frac{2^{N (H(W|X, S)+\epsilon_N)}}
{2^{N (H(W)-\epsilon_N)}}\right] \notag \\
& \stackrel{(b)}{\leq} |M| 2^{-N (I(W;X,S)  - 2\epsilon_{N})} \\  
& \leq \frac{(|M|-1)}{|M|} 2^{N \left(H(X|Y) - I(W;X,S)  - \epsilon_{N}\right)  }
\end{align}
In the above inequality, (a) follows from the fact that there are at least $N (H(W)-\epsilon_N)$ typical sequences $W_{1}^{N}$ and at most $N (H(W|X, S)+\epsilon_N)$ typical sequences $W_{1}^{N}$, which are jointly typical with $x_{1}^{N}$ and $s_{1}^{N}$. And $(b)$ follows from the inequality $(1-x)^n \geq 1-nx$.

Note that
\begin{align}
& I(W;X,S)  = I(W;X|S) + I(W;S) \stackrel{(a)}{=} I(W;X|S) \notag \\
& = H(X|S) - H(X|S,W)  \stackrel{(b)}{=} H(X|S)
\end{align}
where, (a) follows from the fact that the auxiliary codeword bit $W$ is generated independent of the state information $S$, and (b) follows from the fact that $X$ is deterministic given $S$ and $W$.
Because $I(X;Y)\geq I(X;S)$, we have $H(X|Y)\leq H(X|S)$. Then, we have
\begin{align}
H(X|Y) \leq H(X|S) =  I(W;X,S)
\end{align}
Therefore, $P(E_5)$ also goes to zero as $N$ goes to infinity. The lemma follows.
\end{proof}

\begin{proof}[the proof of Lemma \ref{less_noisy_lemma}]
It follows from Lemma \ref{channel_capacity_lemma} that  $I(Z;X|S)$ is the constrained channel capacity for the channel in Fig. \ref{auxiliary_system_model}.  
It follows from Lemma \ref{lemma_less_noisy_rate} that $I(Z;X|Y)$ is an achievable rate for the same channel with an additional shared random signal $Y$. It can be checked that the constraint $\sum_{i=1}^{N}c(s_i,x_i)=N(1-\beta)\gamma(1-\epsilon_N)$ is also satisfied by the coding scheme in Lemma \ref{lemma_less_noisy_rate}.
However, sharing common random signals between the encoder and decoder can not increase channel capacities. Therefore,
\begin{align}
I(Z;X|Y)\leq I(Z;X|S).
\end{align}
The lemma is thus proven. 
\end{proof}

\begin{theorem}
\label{entropy_set_theorem}
Assume the joint probability distribution of $S,X,Y$ as defined in Eqs. \ref{model_p_2}, \ref{model_p_3}, and the channel conditional probability distribution $P(Y_i|X_i)$. Let ${\mathcal G}_N$ and ${\mathcal F}_N$ be the high entropy sets as defined in Section \ref{section_basic_scheme}. Then,
${\mathcal G}_N \subset {\mathcal F}_N$, if and only if $I(X;Y) \geq I(X;S)$. 
\end{theorem}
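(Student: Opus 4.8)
The plan is to prove the two directions of the equivalence separately, using Theorem~\ref{less_noisy_theorem} to translate the mutual-information condition $I(X;Y)\geq I(X;S)$ into the statement that the channel $X\rightarrow Y$ is less noisy than $X\rightarrow S$. The whole argument then hinges on reducing the comparison of the two high-entropy sets to a per-index comparison of the polarized conditional entropies $H(U_i\mid Y_1^N,U_1^{i-1})$ and $H(U_i\mid S_1^N,U_1^{i-1})$.

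For the sufficiency direction, I would assume $I(X;Y)\geq I(X;S)$, so that $X\rightarrow Y\succeq X\rightarrow S$, and first lift this single-letter relation to the product channels $X_1^N\rightarrow Y_1^N$ and $X_1^N\rightarrow S_1^N$. This can be done by a hybrid argument: interpolate between the observation vectors $S_1^N$ and $Y_1^N$ one coordinate at a time, replacing $S_k$ by $Y_k$, and show that each swap cannot decrease $I(Z;\cdot)$. The structural fact that makes each step go through is that, because the triples $(S_j,X_j,Y_j)$ are i.i.d.\ with $S_j\leftarrow X_j\rightarrow Y_j$, the pair $(Y_k,S_k)$ is conditionally independent of everything else given $X_k$; hence, after conditioning on the already-swapped and not-yet-swapped coordinates, the channels $X_k\rightarrow Y_k$ and $X_k\rightarrow S_k$ are unchanged and the single-letter less-noisy property applies directly. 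This yields $X_1^N\rightarrow Y_1^N\succeq X_1^N\rightarrow S_1^N$. Next I would transfer this to the polarized coordinates. Since $G_N$ is a bijection, $U_1^N$ and $X_1^N$ determine each other, so the product relation holds verbatim with input $U_1^N$. Fixing an index $i$ and a value $U_1^{i-1}=u$, I would invoke the product less-noisy relation with $Z=U_i$ and input law $P(X_1^N\mid U_1^{i-1}=u)$; this is legitimate because conditioning on $U_1^{i-1}$ (a function of the input) leaves the channel transition laws untouched and only restricts the input distribution, which is exactly the quantification ``for all $P_{Z,X}$'' in the definition of less noisy. Averaging over $u$ gives $I(U_i;Y_1^N\mid U_1^{i-1})\geq I(U_i;S_1^N\mid U_1^{i-1})$, hence $H(U_i\mid Y_1^N,U_1^{i-1})\leq H(U_i\mid S_1^N,U_1^{i-1})$ for every $i$. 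With $\mathcal F_N$ and $\mathcal G_N$ defined by a common high-entropy threshold, this inequality immediately forces $\mathcal G_N\subset\mathcal F_N$.

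For the necessity direction, I would argue by cardinality. Source polarization~\cite{arikan10}, together with the entropy conservation $\sum_i H(U_i\mid U_1^{i-1},S_1^N)=H(U_1^N\mid S_1^N)=H(X_1^N\mid S_1^N)=N\,H(X\mid S)$ and the analogous identity for $Y$, shows that $|\mathcal F_N|/N\to H(X\mid S)$ and $|\mathcal G_N|/N\to H(X\mid Y)$. Thus $\mathcal G_N\subset\mathcal F_N$ yields $|\mathcal G_N|\leq|\mathcal F_N|$, and passing to the limit gives $H(X\mid Y)\leq H(X\mid S)$, i.e.\ $I(X;Y)\geq I(X;S)$.

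I expect the main obstacle to be the sufficiency direction, specifically the justification that the product less-noisy relation may be applied coordinate-wise with the input variable $U_i$ under the conditioning $U_1^{i-1}=u$. The delicate point is that $U_1^{i-1}$ is itself a function of the channel input, so one must verify carefully that conditioning on it does not deform the channels $X\rightarrow Y$ and $X\rightarrow S$ but only reshapes the input law, so that the ``for all input distributions'' clause of the less-noisy definition can be invoked. The hybrid/telescoping step that lifts less noisy from a single letter to the product channel is the second place where care is needed, since it relies essentially on the conditional independence $(Y_k,S_k)\perp(\text{rest})\mid X_k$.
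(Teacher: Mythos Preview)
Your proposal is correct, and for the sufficiency direction it follows the same overall logic as the paper: translate $I(X;Y)\geq I(X;S)$ into the less-noisy relation $X\rightarrow Y\succeq X\rightarrow S$ via Theorem~\ref{less_noisy_theorem}, and then deduce the per-index inequality $H(U_i\mid Y_1^N,U_1^{i-1})\leq H(U_i\mid S_1^N,U_1^{i-1})$, hence ${\mathcal G}_N\subset{\mathcal F}_N$. The difference is only in packaging. The paper does not carry out the tensorization and per-index step itself; it simply invokes Theorem~\ref{universal_theorem} (quoted from~\cite{sutter14}), which asserts directly that less noisy implies the containment of the low-entropy sets ${\mathcal D}_\epsilon^N(X\mid Z)\subset{\mathcal D}_\epsilon^N(X\mid Y)$, equivalently the containment of the high-entropy sets in the other direction. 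Your hybrid argument and the subsequent ``condition on $U_1^{i-1}=u$, apply less noisy with $Z=U_i$, average over $u$'' step are exactly the standard proof of that cited theorem, so what you are doing is reproving Theorem~\ref{universal_theorem} in situ rather than citing it. Both routes are valid; yours is self-contained, the paper's is shorter.

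For the necessity direction your cardinality argument via $|{\mathcal F}_N|/N\to H(X\mid S)$ and $|{\mathcal G}_N|/N\to H(X\mid Y)$ is a clean way to close the equivalence; the paper's own proof is terse and does not spell this direction out. One small caveat: your argument uses the inclusion ${\mathcal G}_N\subset{\mathcal F}_N$ along a sequence $N\to\infty$ in order to pass to the limit, so it establishes necessity under the reading ``${\mathcal G}_N\subset{\mathcal F}_N$ for all (large) $N$,'' which is the intended reading here. If one insisted on a single fixed $N$ the limit step would not be available, but that is a quantifier issue in the statement rather than a defect in your reasoning.
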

\begin{proof}
Note that Theorem \ref{entropy_set_theorem} follows immediately from Lemma \ref{less_noisy_lemma}. This is because the following Theorem \ref{universal_theorem} from~\cite{sutter14}.
\end{proof}
\begin{theorem}
\label{universal_theorem}
Let ${\mathcal W}:{\mathcal X}\rightarrow {\mathcal Y}$ and ${\mathcal V}:{\mathcal X}\rightarrow {\mathcal Z}$ be two discrete memory-less channels, such that ${\mathcal W} \succeq {\mathcal V}$, then for any $\epsilon \in (0,1)$, we have ${\mathcal D}_{\epsilon}^N(X|Z) \subset {\mathcal D}_{\epsilon}^{N}(X|Y)$. The definitions of the sets are
\begin{align}
& {\mathcal D}_{\epsilon}^{N} (X|Y) = \{i: H(U_i|U_{1}^{i-1},Y_{1}^{N})\leq \epsilon \} \notag \\
& {\mathcal D}_{\epsilon}^{N} (X|Z) = \{i: H(U_i|U_{1}^{i-1},Z_{1}^{N})\leq \epsilon \} \notag 
\end{align}
\end{theorem}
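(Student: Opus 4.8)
The plan is to deduce the set inclusion from a single pointwise comparison of synthetic-channel entropies, and to obtain that comparison by showing that the less-noisy ordering ${\mathcal W}\succeq{\mathcal V}$ is inherited by every pair of polarized (synthetic) channels. Concretely, it suffices to prove that for every blocklength $N=2^{k}$ and every index $i$, $H(U_i\mid U_1^{i-1},Y_1^N)\le H(U_i\mid U_1^{i-1},Z_1^N)$, since then $H(U_i\mid U_1^{i-1},Z_1^N)\le\epsilon$ forces $H(U_i\mid U_1^{i-1},Y_1^N)\le\epsilon$, which is exactly ${\mathcal D}_{\epsilon}^{N}(X|Z)\subset{\mathcal D}_{\epsilon}^{N}(X|Y)$.

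First I would recast the pointwise inequality as a less-noisy statement about synthetic channels. Define ${\mathcal W}_N^{(i)}$ to be the channel with input $U_i$ and output $(U_1^{i-1},Y_1^N)$, and ${\mathcal V}_N^{(i)}$ the channel with input $U_i$ and output $(U_1^{i-1},Z_1^N)$; both kernels are determined by the source $P_X$ together with ${\mathcal W}$ and ${\mathcal V}$ respectively. If I can establish ${\mathcal W}_N^{(i)}\succeq{\mathcal V}_N^{(i)}$, then applying the definition of less noisy with the trivial auxiliary variable $T=U_i$ (so that $T\to U_i\to(\text{output})$ holds vacuously, for any coupling) yields $I(U_i;U_1^{i-1},Y_1^N)\ge I(U_i;U_1^{i-1},Z_1^N)$. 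Note that only the marginal joint of input and output enters here, so the fact that the two outputs share $U_1^{i-1}$ is irrelevant. Subtracting both sides from the channel-independent quantity $H(U_i)$ then gives the desired entropy inequality.

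Next I would prove ${\mathcal W}_N^{(i)}\succeq{\mathcal V}_N^{(i)}$ by induction on $k=\log_2 N$, following the recursive structure of $G_k$. The base case $N=1$ is the hypothesis ${\mathcal W}\succeq{\mathcal V}$ itself. For the inductive step I would use the fact that each level-$N$ synthetic channel arises from a level-$N/2$ synthetic channel by a single basic polar transform (the combining step producing the odd indices and the splitting step producing the even indices). Thus the whole theorem reduces to one lemma: \emph{if ${\mathcal W}\succeq{\mathcal V}$ then ${\mathcal W}^-\succeq{\mathcal V}^-$ and ${\mathcal W}^+\succeq{\mathcal V}^+$.} The natural tool here is van Dijk's characterization that ${\mathcal W}\succeq{\mathcal V}$ is equivalent to concavity of the functional $P\mapsto I(P,{\mathcal W})-I(P,{\mathcal V})$ on the input simplex; I would express the corresponding functionals for ${\mathcal W}^{\pm}$ in terms of that for ${\mathcal W}$ and verify that concavity is transmitted.

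The main obstacle is the combining transform. The splitting transform is comparatively benign: conditioned on the revealed bit $U_1$, its output factors into two conditionally independent uses of ${\mathcal W}$ on a common input, so that step essentially asks whether a same-input product of identical copies preserves the ordering. The combining transform is harder because the partner input $U_2$ is \emph{not} revealed in the output $(Y_1,Y_2)$; the hidden $U_2$ correlates the two observations, so one cannot simply condition on it and decouple (indeed $I(T;Y_1,Y_2)=I(T;Y_1,Y_2\mid U_2)-I(T;U_2\mid Y_1,Y_2)$, and the last term is a genuine loss). Because less noisy is strictly weaker than stochastic degradation, this step cannot be handled by a degradation argument and really requires exploiting the additive ($\oplus$) structure inside the concavity computation, as in \cite{sutter14}; this is where I would expect to spend essentially all of the effort.
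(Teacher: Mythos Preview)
The paper does not actually prove this theorem; it is quoted verbatim from \cite{sutter14} and used as a black box in the proof of Theorem~\ref{entropy_set_theorem}. So there is no ``paper's own proof'' to compare against here.

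That said, your outline is precisely the strategy of \cite{sutter14}: reduce the set inclusion to the pointwise entropy comparison, recognize that this is equivalent to preservation of the less-noisy order under each one-step polar transform, and then prove the single-step lemma ${\mathcal W}\succeq{\mathcal V}\Rightarrow{\mathcal W}^{\pm}\succeq{\mathcal V}^{\pm}$ via van~Dijk's concavity characterization of less noisy. Your diagnosis that the plus transform is routine (conditioning on the revealed $U_1$ decouples the two uses) while the minus transform carries the real content is also accurate; in \cite{sutter14} the minus step is handled by writing the relevant mutual-information difference as an average over the hidden partner input and exploiting that the $\oplus$ structure makes the resulting map affine in one argument, so concavity in the remaining argument survives. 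One small wording issue: each level-$N$ synthetic channel is built from \emph{two independent copies} of a single level-$N/2$ synthetic channel, not from one copy, so your inductive hypothesis should be stated for the pair $({\mathcal W}_{N/2}^{(i)},{\mathcal V}_{N/2}^{(i)})$ and the single-step lemma applied to two i.i.d.\ uses of each; you clearly intend this, but make it explicit.
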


\section{Conclusion}

\label{sec_conclusion}

The paper discusses a low-complexity joint WOM and error-control coding scheme based on polar coding. We prove a sufficient and necessary condition for the noisy reading channel being less noisy than the test channel in data encoding. We show that ${\mathcal G}_N \subset {\mathcal F}_N$, if and only if $I(X;Y) \geq I(X;S)$. Thus, the above low-complexity coding scheme can be applied for most practical flash memory coding scenarios, unless $I(X;Y) < I(X;S)$, which implies that the reading channels are extremely noisy.

\nocite{*}
\bibliographystyle{IEEEtran}
\bibliography{the_bib}

\end{document}